\documentclass[10pt, conference, a4paper]{IEEEtran}

\usepackage{amsmath}
\usepackage{amssymb}
\usepackage{amsthm}
\usepackage{color}
\usepackage{colortbl}
\usepackage{verbatim}
\usepackage[pdftex]{graphicx}
\usepackage{url}
\usepackage{cite}
\usepackage{tikz}
\usetikzlibrary{matrix}
\usepackage{hyperref}

\usepackage{slashbox}

\textheight      9.65in
\textwidth       7.0in

\oddsidemargin   -0.38in
\evensidemargin  -0.38in

\topmargin      -0.55in
\topskip         0.00in
\footskip        0.20in

\newtheorem{theorem}{Theorem}
\newtheorem{lemma}{Lemma}

\newtheorem{corollary}{Corollary}
\newtheorem{definition}{Definition}
\newtheorem{example}{Example}
\newtheorem{remark}{Remark}
\newtheorem{construction}{Construction}

\newcommand{\field}[1]{\mathbb{#1}}

\newcommand{\F}{\field{F}}
\newcommand{\Z}{\field{Z}}

\newcommand{\cA}{{\mathcal A}}

\newcommand{\cC}{{\mathcal C}}

\newcommand{\cG}{{\mathcal G}}

\newcommand{\cP}{{\mathcal P}}

\newcommand{\sP}{\cP}
\newcommand{\sG}{\cG}

\newcommand{\Gr}{\smash{{\sG\kern-1.5pt}_q\kern-0.5pt(n,k)}}
\newcommand{\Gk}{\smash{{\sG\kern-1.5pt}_q\kern-0.5pt(n,k_1)}}
\newcommand{\Gkk}{\smash{{\sG\kern-1.5pt}_q\kern-0.5pt(n,k_2)}}
\newcommand{\Grtwo}{\smash{{\sG\kern-1.5pt}_2\kern-0.5pt(n,k)}}
\newcommand{\Gkone}{\smash{{\sG\kern-1.5pt}_q\kern-0.5pt(n,k_1)}}
\newcommand{\Gktwo}{\smash{{\sG\kern-1.5pt}_q\kern-0.5pt(n,k_2)}}
\newcommand{\Ps}{\smash{{\sP\kern-2.0pt}_q\kern-0.5pt(n)}}


\newcommand{\LIN}[3]{\ensuremath{[#1,#2,#3]}}
\newcommand{\interval}[1]{\ensuremath{[#1)}}
\newcommand{\simplex}[1]{\ensuremath{\mathcal{S}_{#1}}}
\newcommand{\anti}[2]{\ensuremath{\mathcal{A}_{#1,#2}}}
\newcommand{\tn}[1]{\textnormal{#1}}

\begin{document}

\title{Optimal Binary Locally Repairable\\ Codes via Anticodes}

\IEEEoverridecommandlockouts

\author{\IEEEauthorblockN{Natalia Silberstein and Alexander Zeh}\thanks{N. Silberstein has been supported in part at the Technion by a Fine Fellowship. A. Zeh has been supported by the German research council (Deutsche Forschungsgemeinschaft, DFG) under grant Ze1016/1-1.}
\IEEEauthorblockA{
Computer Science Department\\
Technion---Israel Institute of Technology\\
Haifa 32000, Israel\\
\texttt{natalys@cs.technion.ac.il, alex@codingtheory.eu}
}}
\maketitle

\begin{abstract}
This paper presents a construction for several families of optimal binary locally repairable codes (LRCs) with  small locality (2 and 3). This construction is  based on various anticodes. It provides binary LRCs which attain the Cadambe--Mazumdar bound. Moreover, most of these codes are optimal with respect to the Griesmer bound.
\end{abstract}



\section{Introduction}
Locally repairable codes (LRCs) are a family of erasure  codes which allow local correction of erasures, where any code symbol can be recovered by using a small (fixed) number of other code symbols. The concept of LRCs was motivated by application to distributed storage systems (DSSs), (see e.g.~\cite{dimakis_network_2010, dimakis_survey_2011} and references therein).
DSSs store data across a network of nodes in a redundant form to ensure resilience against node failures. Using of LRCs to store data in DSSs enables to repair a failed node locally, i.e., by accessing  a small number of other nodes in the system.

The $i$th code symbol $c_i$, $1\leq i\leq n$, of an $\LIN{n}{k}{d}$ linear code~$\cC$  is said to have \emph{locality} $r$ if $c_i$ can be recovered by accessing at most $r$ other code symbols. A code $\cC$ is said to have locality $r$ if all its symbols have locality $r$. Such codes are referred to as locally repairable (or recoverable) codes (LRCs).
LRCs were introduced by Gopalan~\textit{et al.} in~\cite{gopalan_locality_2012}. It was shown in~\cite{gopalan_locality_2012} that the minimum distance  of an $[n,k,d]$ LRC with locality $r$ should satisfy the following  generalization of the Singleton bound
\begin{equation} \label{eq_GeneralizedSingleton}
d \leq n-k+2-\left \lceil \frac{k}{r} \right \rceil.
\end{equation}
Constructions of LRCs which attain this bound were presented in~\cite{westerback_almost_2014,gopalan_explicit_2014,gopalan_locality_2012,tamo_family_2014}. Further generalizations of LRCs to the codes which can locally correct more than one erasure,
 to the LRCs with multiple repair alternatives,
 and to the vector LRCs were considered in~\cite{kamath_codes_2014, kamath_explicit_2013,pamies-juarez_locally_2013,papailiopoulos_locally_2014,prakash_optimal_2012,prakash_codes_2014, rawat_optimal_2014,rawat_locality_2014, silberstein_optimal_2013,tamo_family_2014}.

However, to attain the bound~(\ref{eq_GeneralizedSingleton}) and its generalizations~\cite{kamath_codes_2014,papailiopoulos_locally_2014,prakash_optimal_2012,rawat_locality_2014}, the known codes should be defined over a large finite field. In~\cite{tamo_family_2014} Tamo and Barg presented LRCs satisfying the bound~(\ref{eq_GeneralizedSingleton}), which are defined over a field of size slightly greater than the length of the code. This is the smallest field size for optimal LRCs known so far.

Codes over small (especially binary) alphabets are of a particular interest due to their implementation ease. Recently, a new bound for LRCs which takes the size of the alphabet into account was established by Cadambe and Mazumdar~\cite[Thm. 1]{cadambe_upper_2013}. They showed that
the dimension $k$ of an \LIN{n}{k}{d} LRC $\cC$ over $\F_q$ with locality $r$  is upper bounded by
\begin{equation} \label{eq_CMBound}
k \leq \min_{t \in \Z^{+}} \left\{tr + k_{opt}^{(q)}\big( n-t(r+1), d \big)  \right\},
\end{equation}
where $k_{opt}^{(q)}(n, d)$ is the largest possible dimension of a code of length $n$, for a given alphabet size $q$ and a given minimum distance $d$.
This bound applies to both linear and nonlinear codes. In the case of a nonlinear code, the parameter $k$ is defined as $|\cC|/ \log q$. Moreover, it was shown in~\cite{cadambe_upper_2013} that the family of binary simplex codes attains the bound~(\ref{eq_CMBound}) for $r=2$.
To the best of our knowledge, there are only three additional works that consider constructions of binary LRCs, namely the papers of Shahabinejad~\textit{et al.}~\cite{shahabinejad_efficient_2014}, Goparaju and Calderbank~\cite{goparaju_binary_2014}, and Zeh and Yaakobi~\cite{zeh_optimal_2015}, where the constructions of \cite{goparaju_binary_2014} and \cite{zeh_optimal_2015} consider binary cyclic LRCs.

In this paper we propose constructions of new binary LRCs which attain the bound~(\ref{eq_CMBound}). All our LRCs have a small locality ($r=2$ and $r=3$), moreover, most of our codes attain the Griesmer bound. Our constructions use a method of Farrell~\cite{farrell_linear_1970} based on anticodes. In particular, we modify a binary simplex code by deleting certain columns from its generator matrix. These deleted columns form an anticode. We investigate the properties of anticodes which allow constructions of LRCs with small locality. Also, we present optimal binary LRCs  with locality 2 based on subspace codes.

The rest of the paper is organized as follows.
In Section~\ref{sec:Preliminaries} we provide the necessary definitions, in particular, we define anticodes and describe a method of constructing a new code based on a simplex code and an anticode.
In Section~\ref{sec:Constructions} we present our constructions based on various choices of anticodes.
Conclusion is given in Section~\ref{sec:conclusion}.


\section{Preliminaries}
\label{sec:Preliminaries}

Let $\cC$ be a \emph{linear} $\LIN{n}{k}{d}$ \emph{code} of length $n$, dimension $k$ and minimum Hamming distance $d$ over $\F_q$. We say that a $k\times n$ generator matrix $G$ of $\cC$ is in a \emph{standard form} if it has the form $G=(I_k|A)$, where $I_j$ is an identity matrix of order~$j$ and $A$ is a $k\times (n-k)$ matrix. If $G$ is in a standard form then an $(n-k)\times n$ parity-check matrix $H$ can be easily obtained from $G$ in the following way: $H=(-A^{T}|I_{n-k})$. Note that for a binary code, we have $H=(A^T|I_{n-k})$. In the sequel, we will consider only binary codes.
The following simple lemma shows a sufficient condition on a parity-check matrix of a code with locality $r$.

\begin{lemma} \label{rm:locality from H}
An $[n,k,d]$ linear code has locality $r$ if  for every coordinate $i$, $1\leq i\leq n$,  there exists a row $R_i$ of weight at most $r+1$ in its parity-check matrix, which has a nonzero entry in the $i$th coordinate. In this case we say that the coordinate~$i$ \emph{is covered} by the row $R_i$.
\end{lemma}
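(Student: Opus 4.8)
The plan is to verify directly that the stated condition on the parity-check matrix $H$ yields a repair set of size at most $r$ for every coordinate. Fix a coordinate $i$, and let $R_i$ be the guaranteed row of $H$ of Hamming weight at most $r+1$ with a nonzero entry in position $i$. Since $\code$ is binary, $R_i \in \F_2^n$, and the parity-check equation $R_i \cdot c^T = 0$ holds for every codeword $c \in \code$. Writing $S = \{ j : (R_i)_j = 1 \}$ for the support of $R_i$, we have $i \in S$ and $|S| \le r+1$, hence $|S \setminus \{i\}| \le r$. The parity-check equation then reads $c_i = \sum_{j \in S \setminus \{i\}} c_j$ (over $\F_2$), so the erased symbol $c_i$ is recovered by accessing the at most $r$ symbols $\{ c_j : j \in S \setminus \{i\} \}$. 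This is exactly the locality condition of Definition~1 (Locally Repairable Code).

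First I would introduce the support set $S$ of $R_i$ and note $i \in S$; then I would write out the single linear constraint imposed by $R_i$ and solve it for $c_i$, using that in $\F_2$ every nonzero coefficient is $1$ and that $-1 = 1$, so no field inversion or sign bookkeeping is needed. Finally, since this works for every $i \in \{1,\dots,n\}$, every code symbol has locality at most $r$, so $\code$ has locality $r$ by definition.

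There is essentially no obstacle here: the statement is a one-line unwinding of the definition of a parity-check matrix together with the definition of locality, and the only thing to be careful about is that we are over $\F_2$ (so that "nonzero entry" means "entry equal to $1$" and the recovered value is simply the sum of the neighbouring symbols). The restriction to binary codes, already in force in the excerpt, removes the only mild subtlety (over a general $\F_q$ one would need the coefficient of $c_i$ to be invertible, which it automatically is, but the formula would carry field coefficients). I would keep the proof to two or three sentences matching the level of detail the authors expect for a lemma they themselves call "simple."
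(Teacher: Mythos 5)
Your proof is correct and is exactly the argument the authors have in mind; the paper itself omits the proof entirely, calling the lemma ``simple.'' The only cosmetic issue is your reference to ``Definition~1,'' which in the published version is replaced by the informal definition of locality given in the introduction, but the substance --- solving the single parity-check equation supported on $S$ for $c_i$ in terms of the at most $r$ other coordinates of $S$ --- is precisely right.
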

The following two bounds will be used in the sequel.
The Plotkin bound (Thm.~\ref{thm:Plotkin}) holds for nonlinear codes, while the Griesmer bound is restricted to linear codes (Thm.~\ref{thm:Griesmer}).
\begin{theorem}[Plotkin Bound~{\cite[p. 43]{macwilliams_theory_1988}}] \label{thm:Plotkin}
Let $A_2(n,d)$ denote the largest number of  codewords in a binary code of length $n$ and minimum distance $d$.
 If $d$ is even and $2d>n$ then
$$A_2(n,d)\leq 2\left\lfloor\frac{d}{2d-n}\right\rfloor.
$$
\end{theorem}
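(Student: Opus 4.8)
The plan is a classical double-counting of the pairwise Hamming distances in an optimal code. Let $\cC \subseteq \F_2^{\,n}$ attain $M = A_2(n,d)$ with minimum distance $d$, and set
$$ S \;=\; \sum_{x \in \cC}\sum_{y \in \cC} d(x,y). $$
First I would bound $S$ from below: the $M$ diagonal pairs $(x,x)$ contribute $0$, and each of the $M(M-1)$ ordered pairs of distinct codewords contributes at least $d$, so $S \ge M(M-1)\,d$.

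Next I would bound $S$ from above one coordinate at a time. Fix a coordinate $i$ and let $m_i$ be the number of codewords having a $0$ in position $i$; then $M-m_i$ codewords have a $1$ there, and the contribution of coordinate $i$ to $S$ equals $2\,m_i(M-m_i)$. Since $2\,m_i(M-m_i) \le \lfloor M^2/2\rfloor$ for every integer $m_i$ (an AM--GM estimate, giving $M^2/2$ when $M$ is even and $(M^2-1)/2$ when $M$ is odd), summing over the $n$ coordinates yields $S \le n\,\lfloor M^2/2\rfloor$. Combining the two estimates gives the core inequality $M(M-1)\,d \le n\,\lfloor M^2/2\rfloor$.

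It then remains to extract the stated bound by elementary arithmetic, using the hypothesis $2d-n>0$ so that dividing by $2d-n$ preserves inequalities. If $M$ is even, the core inequality becomes $(M-1)d \le nM/2$, i.e.\ $M(2d-n)\le 2d$, so $M \le 2d/(2d-n)$; writing $d/(2d-n)=\lfloor d/(2d-n)\rfloor + f$ with $0\le f<1$ we get $M \le 2\lfloor d/(2d-n)\rfloor + 2f < 2\lfloor d/(2d-n)\rfloor + 2$, and since $M$ is even this forces $M \le 2\lfloor d/(2d-n)\rfloor$. If $M$ is odd, the core inequality becomes $Md \le n(M+1)/2$, i.e.\ $M(2d-n)\le n$, so $M \le n/(2d-n) = 2d/(2d-n)-1 \le 2\lfloor d/(2d-n)\rfloor + 2f - 1 < 2\lfloor d/(2d-n)\rfloor + 1$, hence again $M \le 2\lfloor d/(2d-n)\rfloor$.

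I do not expect a genuine obstacle here. The only points needing a little care are the one-line column estimate $2m_i(M-m_i)\le\lfloor M^2/2\rfloor$ and the parity bookkeeping in the final step; note that the parity of $d$ plays no role in this chain of inequalities, the hypothesis ``$d$ even'' being relevant only elsewhere (for tightness and for the companion refinement when $d$ is odd), while $2d>n$ is exactly what makes the argument run and the resulting bound finite.
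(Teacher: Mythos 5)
Your proof is correct, and it is the classical double-counting argument (lower-bound the sum of all pairwise distances by $M(M-1)d$, upper-bound it coordinatewise by $n\lfloor M^2/2\rfloor$, then split on the parity of $M$), which is exactly the proof in the reference the paper cites for this theorem; the paper itself supplies no proof. Your parity bookkeeping in both cases checks out, and your observation that the evenness of $d$ is not actually needed for this particular inequality (only $2d>n$ is) is accurate.
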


\begin{theorem}[Griesmer Bound~{\cite[p. 547]{macwilliams_theory_1988}}] \label{thm:Griesmer}
The length $n$ of a binary linear code with dimension $k$ and minimum distance~$d$ must satisfy
\begin{equation*}
n \geq \sum_{i=0}^{k-1}\left\lceil\frac{d}{2^i}\right\rceil.
\end{equation*}
\end{theorem}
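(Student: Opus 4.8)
\emph{Proof plan.} The plan is to prove the bound by induction on the dimension $k$, using the \emph{residual code} construction. Fix a binary linear $[n,k,d]$ code $\cC$ and a codeword $c$ of weight exactly $d$; after permuting coordinates we may assume that the support $S$ of $c$ is $\{1,\dots,d\}$. Define the residual code $\cC'=\mathrm{Res}(\cC,c)$ to be the image of $\cC$ under the coordinate projection $\pi$ onto the last $n-d$ positions. The heart of the argument is to show that $\cC'$ is an $[n-d,\,k-1,\,d']$ binary code with $d'\ge\lceil d/2\rceil$.

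For the dimension: the kernel of $\pi$ consists of the codewords of $\cC$ supported inside $S$; a nonzero such codeword has weight at most $|S|=d$, hence (by minimality of $d$) weight exactly $d$ and support exactly $S$, and over $\F_2$ the only vector with support exactly $S$ is $c$ itself. Thus $\ker\pi=\{0,c\}$ and $\dim\cC'=k-1$. For the distance: a nonzero $\bar y\in\cC'$ lifts to exactly the pair $\{y,\,y+c\}\subseteq\cC$; since $c$ is all-ones on $S$, the restrictions of $y$ and $y+c$ to $S$ are complementary, so $|\mathrm{supp}(y)\cap S|+|\mathrm{supp}(y+c)\cap S|=d$, and one of the two lifts, say $y$, satisfies $|\mathrm{supp}(y)\cap S|\le\lfloor d/2\rfloor$. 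Then $\mathrm{wt}(\bar y)=\mathrm{wt}(y)-|\mathrm{supp}(y)\cap S|\ge d-\lfloor d/2\rfloor=\lceil d/2\rceil$, which gives $d'\ge\lceil d/2\rceil$.

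The induction then closes quickly. The base case $k=1$ is immediate, since a single weight-$d$ codeword already forces $n\ge d=\lceil d/2^0\rceil$. For the step ($k\ge2$), apply the inductive hypothesis to $\cC'$ to get $n-d\ge\sum_{i=0}^{k-2}\lceil d'/2^i\rceil$; monotonicity of the ceiling together with the identity $\lceil\lceil a/b\rceil/c\rceil=\lceil a/(bc)\rceil$ for positive integers $b,c$ yields $\lceil d'/2^i\rceil\ge\lceil\lceil d/2\rceil/2^i\rceil=\lceil d/2^{i+1}\rceil$, hence $n-d\ge\sum_{i=1}^{k-1}\lceil d/2^i\rceil$, and adding $d=\lceil d/2^0\rceil$ gives the claim.

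The only delicate points are the two structural facts about $\cC'$: that $\pi$ kills exactly the one-dimensional subspace spanned by $c$ (here both the binary hypothesis and the minimum-distance property are genuinely used), and that complementing a lift on $S$ halves its ``wasted'' weight, which is precisely what produces the factor $\tfrac12$ driving the recursion. Everything else is routine bookkeeping with ceilings.
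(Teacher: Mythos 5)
Your proof is correct. The paper does not prove this theorem at all --- it is quoted directly from MacWilliams--Sloane --- and your residual-code argument (project off the support of a minimum-weight codeword $c$, check that the kernel of the projection is exactly $\{0,c\}$ so the dimension drops by one, use the complementary pair of lifts $y$ and $y+c$ to get minimum distance at least $\lceil d/2\rceil$, then induct on $k$) is precisely the standard proof of the cited result, with all the delicate points (the binary hypothesis, the minimality of $d$, and the ceiling identity $\lceil\lceil d/2\rceil/2^i\rceil=\lceil d/2^{i+1}\rceil$) handled correctly.
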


In the remaining part of this section we define an anticode and recall the anticode-based construction of binary linear codes by Farrell~\cite{farrell_linear_1970} (see also~\cite[p. 548]{macwilliams_theory_1988}).
An \emph{anticode} is a code which may contain repeated codewords and which has an \emph{upper} bound on the distance between the codewords. More precisely,
a binary linear anticode $\cA$ of length $n$ and maximum distance $\delta$ is a set of codewords in $\F_2^n$ such that the  Hamming distance between any pair of codewords is less than or equal to $\delta$. The generator matrix $G_{\cA}$  of $\cA$ is a $k \times n$ binary matrix such that all the $2^k$ combinations of its rows form the codewords of the anticode.
If $\text{rank}(G_A) = \gamma$, then each codeword occurs $2^{k-\gamma}$ times in the anticode.
Due to linearity, we have
\begin{equation}
\delta = \max_{\mathbf{a} \in \cA} \text{wt}(\mathbf{a}),
\end{equation}
where $\text{wt}(\mathbf{v})$ denotes the Hamming weight of a vector $\mathbf{v}$.

\begin{example} \label{ex_Anticode}
Let $G_{\cA}$ be a $3\times 3$ generator matrix given by
\begin{footnotesize}
\begin{equation*}
G_{\cA} = \begin{pmatrix}
1 & 1 & 0 \\
1 & 0 & 1 \\
0 & 1 & 1
\end{pmatrix}.
\end{equation*}
\end{footnotesize}
\hspace{-0.1cm}It generates  a binary linear anticode $\cA$ of length $n=3$ and $\delta=2$, where
the set of $2^3$ codewords is
\begin{equation*}
\cA = \big\{ (000),(110),(101),(011),(011),(101),(110),(000) \big\}.
\end{equation*}
\end{example}

The construction of Farrell~\cite{farrell_linear_1970}
which we use to construct optimal LRCs is based on a modification of a generator matrix for a binary simplex code.

A binary \emph{simplex} code \simplex{m} is a $\LIN{2^m-1}{m}{2^{m-1}}$ code with generator matrix $G_m$ whose columns consist of all distinct nonzero vectors in $\F_{2}^m$. In the rest of the paper we assume w.l.o.g. that $G_m$ is in the standard form and that the columns of $G_m$ are ordered according to their Hamming weight.

\begin{construction}[Farrell Construction~\cite{farrell_linear_1970}] \label{const_Farrell}
Let $G_m$ be the ${m \times (2^m-1)}$ generator matrix of a binary simplex code \simplex{m} and let $G_{\cA}$ be the $k \times n$ generator matrix with distinct columns of a binary linear anticode $\cA$ of length $n$ and  maximum distance~$\delta$. Then, the $m \times (2^m-1-n)$ matrix obtained by deleting the $n$ columns of $G_{\cA}$ from $G_m$ is a generator matrix of a binary $\LIN{2^m-1-n}{\leq m}{2^{m-1}-\delta}$ code.
\end{construction}

\begin{example} \label{ex_FarrelConstruction}
Let $G_4$ be the $4 \times 15$ generator matrix of a simplex code \simplex{4} and let $G_{\cA}$ be  the generator matrix of the anticode given in Example~\ref{ex_Anticode} with the additional first row of zeros. By deleting the columns of $G_{\cA}$ from $G_4$ we obtain the following matrix
\begin{tikzpicture}
\tikzset{BarreStyle/.style = {opacity=.5,line width=4 mm,line cap=round,color=#1}}
\tikzstyle{every node}=[font=\footnotesize]
\node(Z){$G_4\setminus G_{\cA}=$};
\matrix (A) [matrix of math nodes,column sep=0 mm, left delimiter={(},right delimiter={)}, right=10pt] at (Z.east)
{
1 & 0 & 0 & 0 & 1 & 1 & 1 & 0 & 0 & 0 & 1 & 1 & 1 & 0 & 1 \\
0 & 1 & 0 & 0 & 1 & 0 & 0 & 1 & 1 & 0 & 1 & 1 & 0 & 1 & 1 \\
0 & 0 & 1 & 0 & 0 & 1 & 0 & 1 & 0 & 1 & 1 & 0 & 1 & 1 & 1 \\
0 & 0 & 0 & 1 & 0 & 0 & 1 & 0 & 1 & 1 & 0 & 1 & 1 & 1 & 1 \\
};
\draw [BarreStyle=black] (A-1-8.north) to (A-4-8.south) ;
\draw [BarreStyle=black] (A-1-9.north) to (A-4-9.south) ;
\draw [BarreStyle=black] (A-1-10.north) to (A-4-10.south) ;
\end{tikzpicture}\\
where the shadowed columns in $\{8,9,10\}$ are deleted, which generates a  $\LIN{12}{4}{6}$ code. Note that this code attains the Griesmer bound.


\end{example}

\section{Constructions of Optimal Binary LRCs}
\label{sec:Constructions}

In this section we provide constructions of binary LRCs based on the Farrell construction (see Construction~\ref{const_Farrell}), by using various anticodes. We prove that our codes have a small locality ($r=2$ or $r=3$) and attain the Cadambe--Mazumdar bound~(\ref{eq_CMBound}). Most of our codes also attain the Griesmer bound (see Thm.~\ref{thm:Griesmer}).

 \subsection{LRCs based on Anticodes}
First, we generalize Example~\ref{ex_Anticode} and consider an anticode such that  all the vectors of length $s$ and weight 2 form the columns of its generator matrix. We denote such an anticode by $\cA_{s,2}$. For example, the anticode from Example~\ref{ex_Anticode} is an $\cA_{3,2}$ anticode.
 First, we need the following theorem about the parameters of~$\cA_{s,2}$.

 \begin{theorem}
 \label{thm:anticode-2weight}
  Let $\cA_{s,2}$ be a binary anticode such that all weight-2 vectors of length $s$ form the columns
 of its generator matrix~$G_{\cA}$.  Then $\cA_{s,2}$ has length $\binom{s}{2}$ and maximum weight ${\delta=\left\lfloor\frac{s^2}{4}\right\rfloor}$.
 \end{theorem}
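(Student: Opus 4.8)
The length claim is immediate: the columns of $G_{\cA}$ are exactly the weight-$2$ vectors in $\F_2^s$, and there are $\binom{s}{2}$ of them. So the work is all in computing the maximum weight $\delta$. The plan is to identify a codeword of $\cA_{s,2}$ with the choice of a subset $T \subseteq [s]$ of the $s$ rows of $G_{\cA}$ that get summed: the $j$th coordinate of the resulting codeword (indexed by a pair $\{a,b\}$) is $1$ precisely when exactly one of $a,b$ lies in $T$. Thus the weight of the codeword corresponding to $T$ is the number of pairs $\{a,b\}$ with one endpoint in $T$ and one in its complement $[s]\setminus T$; that is, $\mathrm{wt} = |T|\cdot(s-|T|)$. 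Maximizing over $T$ reduces to maximizing $t(s-t)$ over integers $t \in \{0,1,\dots,s\}$, which is a standard one-variable optimization: the maximum is attained at $t=\lfloor s/2\rfloor$ (and $t=\lceil s/2\rceil$), giving $\lfloor s/2\rfloor\lceil s/2\rceil = \lfloor s^2/4\rfloor$. By linearity of the anticode (the displayed identity $\delta=\max_{\mathbf a\in\cA}\mathrm{wt}(\mathbf a)$), this is exactly $\delta$.

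Concretely, the steps I would carry out are: (1) set up the row-subset parametrization of codewords and observe each codeword of $\cA_{s,2}$ is the indicator of the ``cut'' between $T$ and $[s]\setminus T$ in the complete graph $K_s$; (2) conclude $\mathrm{wt}(\mathbf a_T) = |T|(s-|T|)$; (3) verify $\max_{0\le t\le s} t(s-t) = \lfloor s^2/4\rfloor$ by the elementary facts that $t(s-t)$ is concave in $t$ with real maximum at $t=s/2$, and that $\lfloor s/2\rfloor\lceil s/2\rceil$ equals $s^2/4$ when $s$ is even and $(s^2-1)/4=\lfloor s^2/4\rfloor$ when $s$ is odd; (4) invoke the linearity formula for $\delta$ to finish. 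It is worth remarking that this also shows $\cA_{s,2}$ has rank $s-1$ (the all-ones subset $T=[s]$ gives the zero codeword, so the $s$ rows satisfy exactly one nontrivial relation), though this is not needed for the statement.

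There is no real obstacle here — the only mild subtlety is the case split on the parity of $s$ in step (3) to match the floor function in the statement, and making sure the endpoint of the optimization is an integer. Everything else is a direct translation between codewords of $\cA_{s,2}$ and bipartitions of $[s]$.
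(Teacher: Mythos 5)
Your proposal is correct and follows essentially the same route as the paper: both identify the codewords of $\cA_{s,2}$ with cuts of the complete graph $K_s$ determined by a subset $T$ of the rows, compute the weight as $|T|(s-|T|)$, and maximize this over $|T|$ to obtain $\lfloor s^2/4\rfloor$. The only cosmetic difference is that you count the cut edges directly (pairs with exactly one endpoint in $T$) while the paper derives the same formula via the $(s-1)$-regularity of $K_s$.
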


\begin{proof}
There are $\binom{s}{2}$ vectors of weight 2 and length $s$, hence the number of columns in $G_{\cA}$ and the length of  $\cA_{s,2}$ is $\binom{s}{2}$.
Next, we prove that the value of the maximum weight~$\delta$ is $\left\lfloor\frac{s^2}{4}\right\rfloor$. Note that the $s\times \binom{s}{2}$ generator matrix $G_{\cA}$ is also an incidence matrix of a complete graph $K_s=(V,E)$ with $|V|=s$ and ${|E|=\binom{s}{2}}$. Therefore, the maximum weight $\delta$ of the anticode can be described in terms of maximum cut between a subset of vertices $S\subseteq V$  and its complement $S^c$, more precisely,
$$\delta=\max_{1\leq i\leq s}|\textmd{Cut}(S_i,S_i^c)|,
$$
where $S_i$ is a subset of $V$ of size $i$ and $S_i^c$ its complement of size $s-i$.
Since $K_s$ is an $(s-1)$-regular graph, it holds that for all $1\leq i\leq s$,
$(s-1)i=|\textmd{Cut}(S_i,S_i^c)|+2|E_i|$,
where $E_i\subseteq E$ is the set of edges between the vertices in $S_i$. Note that the induced subgraph $(S_i, E_i)$ of $K_s$ is a complete graph $K_i$ and then $|E_i|=\binom{i}{2}$. Thus,
$$|\textmd{Cut}(S_i,S_i^c)|=(s-1)i-i(i-1)=i(s-i)
$$
and
$$
\delta=\max_{1\leq i\leq s}\{i(s-i)\}=\left\lfloor\frac{s^2}{4}\right\rfloor.
$$
\end{proof}
As a consequence of Thm.~\ref{thm:anticode-2weight} and the Farrell construction we have the following theorem.

\begin{theorem} \label{thm:anticode-weight2}
Let $\simplex{m}$ be a $[2^m-1,m,2^{m-1}]$ simplex code, ${m\geq 4}$, and let $G_m$ be its generator matrix. Let $\cA_{s,2}$, ${s\leq m}$, be an anticode defined in Thm.~\ref{thm:anticode-2weight} and let $G_{\cA}$ be its generator matrix. We prepend $m-s$ zeros to every column of $G_{\cA}$ to form an $m\times \binom{s}{2}$ matrix whose columns are deleted from $G_m$ to obtain a generator matrix $G$ for a new code $\cC_{m,s,2}$.
Then $\cC_{m,s,2}$ is a $[2^m-\binom{s}{2}-1,m,2^{m-1}-\left\lfloor\frac{s^2}{4}\right\rfloor]$ LRC with locality $r=2$.

\end{theorem}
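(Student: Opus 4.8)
The plan is to read the length and a lower bound on the distance directly off Construction~\ref{const_Farrell}, and then supply two short arguments: that the dimension stays equal to $m$, and that every surviving coordinate is covered by a weight-$3$ parity check, which yields locality $2$ via Lemma~\ref{rm:locality from H}. Throughout I would identify the coordinates of $\simplex{m}$ with the nonzero vectors of $\F_2^m$, so that the $x$-th entry of the codeword $uG_m$ is $u\cdot x$; in particular every nonzero codeword of $\simplex{m}$ has weight $2^{m-1}$, since exactly half of all vectors $x$ satisfy $u\cdot x=1$ and $x=0$ is not among them. Write $D\subset\F_2^m$ for the set of $\binom s2$ deleted columns --- the weight-$2$ vectors whose support lies in a fixed $s$-element coordinate set --- and $R=\F_2^m\setminus(\{0\}\cup D)$ for the surviving coordinates. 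The vectors of $D$ are distinct and nonzero, so Construction~\ref{const_Farrell} applies and $\cC_{m,s,2}$ has length $2^m-1-\binom s2$.

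For the dimension and the distance the key identity is $\mathrm{wt}(uG)=\mathrm{wt}(uG_m)-\mathrm{wt}(uG_{\cA})$ for every $u\in\F_2^m$, because the entries of $uG_{\cA}$ are precisely the entries of $uG_m$ sitting in the deleted coordinates. Hence for $u\neq 0$ we get $\mathrm{wt}(uG)=2^{m-1}-\mathrm{wt}(uG_{\cA})$, and by Thm.~\ref{thm:anticode-2weight} this lies between $2^{m-1}-\delta$ and $2^{m-1}$ with $\delta=\lfloor s^2/4\rfloor$. Since $s\le m$ and $m\ge 4$ force $\delta\le\lfloor m^2/4\rfloor<2^{m-1}$, every nonzero $u$ gives $uG\neq 0$; thus $u\mapsto uG$ is injective, so $\dim\cC_{m,s,2}=m$ and the nonzero codewords are exactly the words $uG$, $u\neq 0$. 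Their minimum weight is $2^{m-1}-\max_{u\neq 0}\mathrm{wt}(uG_{\cA})=2^{m-1}-\delta$, where the maximal anticode weight $\delta$ is attained by a nonzero $u$ by Thm.~\ref{thm:anticode-2weight}; this is the claimed minimum distance.

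For locality I would show that for each $x\in R$ there is a weight-$3$ codeword of $\cC_{m,s,2}^{\perp}$ that is nonzero at $x$; then Lemma~\ref{rm:locality from H}, applied to any parity-check matrix that contains all such words (completed to full rank if necessary), gives locality $r=2$. Whenever $a,b,c\in R$ are distinct with $a+b+c=0$, the indicator vector of $\{a,b,c\}$ is such a weight-$3$ word, so it suffices to find, for a given $x\in R$, some $y\in\F_2^m$ with $y\notin\{0,x\}$, $y\notin D$ and $x+y\notin D$: then $x,y,x+y$ are three distinct elements of $R$ summing to zero. The forbidden choices of $y$ form the set $\{0,x\}\cup D\cup(x+D)$, of size at most $2+2\binom s2=2+s(s-1)$, and since $s\le m$ with $2^m>2+m(m-1)$ for $m\ge 4$, a valid $y$ exists. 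I expect this last step --- guaranteeing that deleting the $\binom s2$ columns of $D$ never destroys the weight-$3$ check covering a surviving coordinate --- to be the only part needing real care, and it is exactly where the hypotheses $m\ge 4$ and $s\le m$ are used.
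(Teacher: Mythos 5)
Your proposal is correct, and the locality argument is genuinely different from the paper's. The paper proves locality $2$ by writing down the parity-check matrix $H$ in block form (grouping the identity blocks according to the weight of the corresponding columns of $G$) and performing explicit row operations: a row $R^i_x$ associated with a weight-$i$ column is added to a row $R^{i+1}_y$ associated with a weight-$(i+1)$ column whose support extends it, producing a weight-$3$ row covering both coordinates, with the coordinates in $H^1\cup H^m$ handled separately via the last row. This argument hinges on the surviving generator matrix containing columns of all consecutive weights (a point the authors themselves flag as the structural requirement of their method). You instead characterize the weight-$3$ dual codewords of a punctured simplex code as indicators of triples $\{x,y,x+y\}$ of surviving columns and show by a counting argument that the forbidden set $\{0,x\}\cup D\cup(x+D)$ of size at most $2+s(s-1)$ cannot exhaust $\F_2^m$ when $m\geq 4$; this is cleaner, makes transparent where $m\geq4$ and $s\leq m$ enter, and generalizes to any sufficiently small deleted set $D$ regardless of its weight structure. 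You are also more careful than the paper on two minor points: the Farrell construction only guarantees dimension $\leq m$, and your weight identity $\mathrm{wt}(uG)=2^{m-1}-\mathrm{wt}(uG_{\cA})>0$ pins down the dimension as exactly $m$ and the distance as exactly $2^{m-1}-\lfloor s^2/4\rfloor$ (since the maximal anticode weight is attained by a nonzero message). The only cosmetic remark is that, like the paper, you should note that the covering dual words need not form a minimal parity-check matrix --- which you do --- and that locality exactly $2$ (rather than $1$) is immediate since $G$ has distinct columns.
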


\begin{proof}
The prepending of zeros to $G_{\cA}$ does not change the length and the maximum weight of the anticode. Hence, the length, the dimension, and the minimum distance of the obtained code $\cC_{m,s,2}$ directly follow from the Farrell construction and Thm.~\ref{thm:anticode-2weight}.

To prove that the locality of $\cC_{m,s,2}$ is $r=2$, by Lemma~\ref{rm:locality from H}  we need  to show that every coordinate is covered by a row of weight 3 of the parity-check matrix for $\cC_{m,s,2}$ (note that clearly locality is not 1).
Since $G_m$ is  in the standard form, the generator matrix $G$ of $\cC_{m,s,2}$ is also in the standard
form. We denote by $G^i$, $1\leq i\leq m$, the submatrix of $G$ which consists of the set of columns of $G$ of weight $i$. Then the parity-check matrix $H$ of $\cC_{m,s,2}$ has the following form:
$$H=\left(
    \begin{array}{c|c|c|c|c}
        (G^2)^T & I_{\binom{m}{2}-\binom{s}{2}} & &  &  \\\hline
       \vdots &  &\ddots  & &  \\\hline
       (G^{m-1})^T &  &&  I_{m} &  \\\hline
      (G^{m})^T &  & &  & 1 \\
    \end{array}
  \right).
$$
We will show that by a simple modification of $H$ with  elementary operations on its rows we obtain a parity-check matrix $H'$ such that every coordinate of the code will be covered by a row of $H'$ of weight 3.
We define a partition of columns of $H$ into the parts $\{H^1,\ldots,H^m\}$ as follows.
 The part $H^1$ contains the first $m$ columns, and the part $H^i$, $2\leq i\leq m$, corresponds to the columns which contain $I_{\binom{m}{i}}$ included in the rows which contain $(G^i)^T$. (Note that if $s=m$ then $H^2=\emptyset$.)
Let consider the $x$th coordinate in $H^i$, $3\leq i\leq m-1$. There exists a row $R^{i}_{x}$ in $H$ with nonzero entry in this coordinate. This row $R^{i}_{x}$ also contains $i$ nonzero entries in the first $m$ coordinates. Let $R^{i+1}_{y}$ be a row in $H$ such that its $i+1$ nonzero entries in the first $m$ coordinates contain the first $i$ nonzero entries of $R^{i}_{x}$, and which also has one in the $y$th coordinate of $H^{i+1}$.
Then the coordinates $x$ in $H^i$ and $y$ in $H^{i+1}$ are covered by $R^{i}_{x}+R^{i+1}_{y}$, the weight-3 row of $H'$. Note that for every
$x$ in $H^i$ there is such $y$ in $H^{i+1}$. To show that any coordinate in $H^1\cup H^m$ has locality~2, note that when
 we add the last row of $H$ to each one of the $m$ rows which contain the rows of $(G^{m-1})^T$, we obtain $m$ rows of weight 3 in $H'$ with the first nonzero entry in the first $m$ coordinates and the last nonzero entry in the last coordinate. Thus, the obtained parity-check matrix $H'$ contains weight-3 rows which cover all the coordinates (and the last row of $H$, of weight $m+1$).

%
%
%
\end{proof}

In the following example we illustrate the idea of modification of  a parity-check matrix described in the proof of Thm.~\ref{thm:anticode-weight2}.
\begin{example} \label{ex_locality_reduction}
We consider the parity-check matrix $H$ of the $[12,4,6]$ code of Example~\ref{ex_FarrelConstruction} based on the anticode $\anti{3}{2}$. It has the following form, where the vertical lines show the partition of its columns into the parts $H^1,H^2,H^3,H^4$.

\begin{footnotesize}
\begin{equation*}
H=
\begin{array}{c}
    \left(\begin{tabular}{c|c|c|c}
    \tn{1 1 0 0} & \tn{1 0 0} & \tn{0 0 0 0} & \tn{0}\\
    \tn{1 0 1 0} & \tn{0 1 0} & \tn{0 0 0 0} & \tn{0}\\
    \tn{1 0 0 1} & \tn{0 0 1} & \tn{0 0 0 0} & \tn{0}\\
    \tn{1 1 1 0} & \tn{0 0 0} & \tn{1 0 0 0} & \tn{0}\\
    \tn{1 1 0 1} & \tn{0 0 0} & \tn{0 1 0 0} & \tn{0}\\
    \tn{1 0 1 1} & \tn{0 0 0} & \tn{0 0 1 0} & \tn{0}\\
    \tn{0 1 1 1} & \tn{0 0 0} & \tn{0 0 0 1} & \tn{0}\\
    \tn{1 1 1 1} & \tn{0 0 0} & \tn{0 0 0 0} & \tn{1}\\
        \end{tabular}
\right)
\end{array}
\end{equation*}
\end{footnotesize}
We add to each one of the $i$th rows of $H$, $4\leq i\leq 7$, the last row and obtain the following parity-check matrix:

 \begin{footnotesize}
$$H'=\begin{array}{c}
    \left(\begin{tabular}{c|c|c|c}
    \tn{1 1 0 0} & \tn{1 0 0} & \tn{0 0 0 0} & \tn{0}\\
    \tn{1 0 1 0} & \tn{0 1 0} & \tn{0 0 0 0} & \tn{0}\\
    \tn{1 0 0 1} & \tn{0 0 1} & \tn{0 0 0 0} & \tn{0}\\
    \tn{0 0 0 1} & \tn{0 0 0} & \tn{1 0 0 0} & \tn{1}\\
    \tn{0 0 1 0} & \tn{0 0 0} & \tn{0 1 0 0} & \tn{1}\\
    \tn{0 1 0 0} & \tn{0 0 0} & \tn{0 0 1 0} & \tn{1}\\
    \tn{1 0 0 0} & \tn{0 0 0} & \tn{0 0 0 1} & \tn{1}\\
    \tn{1 1 1 1} & \tn{0 0 0} & \tn{0 0 0 0} & \tn{1}\\
        \end{tabular}
\right),
\end{array}$$
\end{footnotesize}
such that every coordinate is covered by a weight-3 row.
\end{example}

\begin{corollary}
For $3\leq s\leq 5$, the code $\cC_{m,s,2}$ obtained in Thm.~\ref{thm:anticode-weight2} attains the bound~(\ref{eq_CMBound}). More precisely,
\begin{itemize}
  \item The code $\cC_{m,3,2}$ obtained by using the anticode $\cA_{3,2}$ is a $[2^m-4,m,2^{m-1}-2]$ LRC with locality $r=2$ which attains the bound~(\ref{eq_CMBound}).
  \item The code $\cC_{m,4,2}$ obtained by using the anticode $\cA_{4,2}$ is a $[2^m-7,m,2^{m-1}-4]$ LRC with locality $r=2$ which attains the bound~(\ref{eq_CMBound}).
  \item The code $\cC_{m,5,2}$ obtained by using the anticode $\cA_{5,2}$  is a $[2^m-11,m,2^{m-1}-6]$ LRC with locality $r=2$ which attains the bound~(\ref{eq_CMBound}).
\end{itemize}
\end{corollary}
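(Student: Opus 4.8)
The plan is to show each of the three codes attains~(\ref{eq_CMBound}) by evaluating the Cadambe--Mazumdar minimum at $t=1$. The stated parameters of $\cC_{m,s,2}$ for $s\in\{3,4,5\}$ are obtained directly from Theorem~\ref{thm:anticode-weight2} by substituting $\binom{s}{2}\in\{3,6,10\}$ and $\lfloor s^2/4\rfloor\in\{2,4,6\}$, so only the optimality requires an argument. Write $n$ and $d=2^{m-1}-c$ with $c\in\{2,4,6\}$ for the length and minimum distance of $\cC_{m,s,2}$. Since $\cC_{m,s,2}$ is a binary linear $[n,m,d]$ code of locality $r=2$, the bound~(\ref{eq_CMBound}) applies to it with $q=2$; hence every term of the right-hand minimum is at least $m$, and in particular the $t=1$ term gives $m\le 2+k_{opt}^{(2)}(n-3,d)$, so $k_{opt}^{(2)}(n-3,d)\ge m-2$ comes for free. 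It therefore suffices to prove the reverse inequality $k_{opt}^{(2)}(n-3,d)\le m-2$: then the $t=1$ term of the minimum equals $m$, so the minimum equals $m$ and $\cC_{m,s,2}$ meets~(\ref{eq_CMBound}).

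To bound $k_{opt}^{(2)}(n-3,d)$ from above I would apply the Plotkin bound (Theorem~\ref{thm:Plotkin}) to a hypothetical binary code of length $n-3$ and minimum distance $d$. The useful numerical observation is that $n-3=2^m-4-\binom{s}{2}$ lies only slightly above $d$: a short computation gives $2d-(n-3)=3$ for $s=3$ and $2d-(n-3)=2$ for $s\in\{4,5\}$, and $d$ is even, so Theorem~\ref{thm:Plotkin} applies and yields $A_2(n-3,d)\le 2\lfloor d/(2d-n+3)\rfloor$. For $s\in\{4,5\}$ this equals $d=2^{m-1}-c<2^{m-1}$, and for $s=3$ it is at most $(2^m-4)/3<2^{m-1}$; in every case $A_2(n-3,d)<2^{m-1}$. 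Consequently any binary linear code of length $n-3$ with minimum distance at least $d$ has fewer than $2^{m-1}$ codewords, hence dimension at most $m-2$, and $k_{opt}^{(2)}(n-3,d)\le m-2$ follows. (An equivalent route uses the Griesmer bound: $\sum_{i=0}^{m-2}\lceil d/2^i\rceil=2^m-2-\sigma$ with $\sigma\in\{3,7,10\}$ strictly exceeds $n-3$ in all three cases, so no $[n-3,m-1,d]$ code exists; the same computation also yields the Griesmer optimality remarked on after Theorem~\ref{thm:anticode-weight2}.)

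I do not expect a real obstacle: the whole argument rests on the single observation that $t=1$ is the correct index in~(\ref{eq_CMBound}), after which one inequality is automatic and the other is a one-line Plotkin (or Griesmer) estimate. The only point requiring care is that the slack is genuinely tight---most visibly for $s=4$, where $n-3=2^m-10$ falls exactly one short of the Griesmer length $2^m-9$ of a hypothetical $[n-3,m-1,d]$ code---so the constants must be tracked precisely and a purely asymptotic estimate would not close the gap.
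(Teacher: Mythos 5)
Your proof is correct and follows essentially the same route as the paper: evaluate the Cadambe--Mazumdar bound at $t=1$ and use the Plotkin bound with $2d-(n-3)\in\{2,3\}$ to force $k_{opt}^{(2)}(n-3,d)\le m-2$; your explicit remark that the reverse inequality is automatic (so the minimum actually equals $m$) is a small clarification the paper leaves implicit. The alternative Griesmer computation you sketch is a valid second route but not the one the paper takes.
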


\begin{proof}
To prove the optimality of the proposed codes we apply  the bound~(\ref{eq_CMBound}) with $t=1$ and use the Plotkin bound (see Thm.~\ref{thm:Plotkin}):

For $s=3$ we have $2+k_{opt}^{(2)}(2^m-7,2^{m-1}-2)\leq 2+\left\lfloor\log 2\left\lfloor\frac{2^{m-1}-2}{3}\right\rfloor\right\rfloor\leq 2+\left\lfloor\log(2^{m-1}-2)\right\rfloor=2+m-2=m$.

For $s=4$ we have $2+k_{opt}^{(2)}(2^m-10,2^{m-1}-4)\leq 2+\left\lfloor\log 2\left\lfloor\frac{2^{m-1}-4}{2}\right\rfloor\right\rfloor= 2+\left\lfloor\log(2^{m-1}-4)\right\rfloor=2+m-2=m$.

For $s=5$ we have $2+k_{opt}^{(2)}(2^m-14,2^{m-1}-6)\leq 2+\left\lfloor\log 2\left\lfloor\frac{2^{m-1}-6}{2}\right\rfloor\right\rfloor= 2+\left\lfloor\log(2^{m-1}-6)\right\rfloor=2+m-2=~m$.
\end{proof}

\begin{remark}
One can check that the codes $\cC_{m,3,2}$ and $\cC_{m,5,2}$ attain the Griesmer bound.
\end{remark}


Note that to apply our modification of a parity-check matrix in the proof for locality 2, the generator matrix of a code obtained by the Farrell construction should contain columns of consecutive weights. Based on this observation, we propose
a generalization of the previous construction of an anticode
and prove that the LRCs obtained from this anticode have locality $r=2$ and attain the Griesmer bound.

\begin{theorem} \label{thm:anticodeSm-1}
Let $\simplex{m}$ be a $[2^m-1,m,2^{m-1}]$ simplex code, $m\geq 4$, and let $G_m$ be its generator matrix. Let $\cA_{t;2,3,\ldots, t-1}$, $3\leq t\leq m$, be an anticode such that its generator matrix $G_{\cA}$ consists of all the columns in $\F_2^t$ of weights in ${\{2,3,\ldots, t-1\}}$.
 We prepend $m-t$ zeros to every column of $G_{\cA}$ to form the $m\times \sum_{i=2}^{t-1}\binom{t}{i}$ matrix whose columns are deleted from $G_m$ to obtain a generator matrix $G$ for a new code  $\cC_{m,t}$. Then  $\cC_{m,t}$ is
 a $[2^{m}-2^t+t+1,m,2^{m-1}-2^{t-1}+2]$ LRC with locality $r=2$ which attains the Griesmer bound.
\end{theorem}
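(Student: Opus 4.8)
The plan is to mimic the structure of the proof of Theorem~\ref{thm:anticode-weight2}, since the anticode $\cA_{t;2,3,\ldots,t-1}$ is a natural generalization of $\cA_{s,2}$. First I would compute the three parameters directly. The number of deleted columns is $\sum_{i=2}^{t-1}\binom{t}{i} = 2^t - \binom{t}{0} - \binom{t}{1} - \binom{t}{t} = 2^t - t - 2$, so the length of $\cC_{m,t}$ is $(2^m-1) - (2^t-t-2) = 2^m - 2^t + t + 1$, as claimed. The dimension stays $m$ because the first $m$ columns of $G_m$ (the identity block, i.e.\ the weight-$1$ columns) are never deleted — note that when $m-t$ zeros are prepended, a weight-$i$ column of $\F_2^t$ becomes a weight-$i$ column of $\F_2^m$, and for $i\in\{2,\ldots,t-1\}$ this is never a weight-$1$ vector; hence $G$ is still in standard form with full row rank. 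For the minimum distance I would invoke the Farrell construction (Construction~\ref{const_Farrell}): I need the maximum weight $\delta$ of $\cA_{t;2,3,\ldots,t-1}$. Since the anticode's generator matrix has columns ranging over all weights in $\{2,\ldots,t-1\}$, a codeword indexed by a subset $S\subseteq[t]$ has weight equal to the number of columns $v$ (of weight in $\{2,\ldots,t-1\}$) with $|S\cap\mathrm{supp}(v)|$ odd. Taking $S$ to be a single coordinate gives weight $\sum_{i=2}^{t-1}\binom{t-1}{i-1} = 2^{t-1}-2$ (summing $\binom{t-1}{j}$ for $j=1,\ldots,t-2$), and one argues this is the maximum, so $\delta = 2^{t-1}-2$ and the distance is $2^{m-1} - \delta = 2^{m-1} - 2^{t-1} + 2$.

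Next, for the locality claim I would repeat the row-operation argument of Theorem~\ref{thm:anticode-weight2} essentially verbatim. The key structural fact that makes the argument go through is that the surviving columns of $G$ have consecutive weights: weight $1$ (the identity block) and weights $t, t+1, \ldots, m$ (all weight-$i$ vectors of $\F_2^m$ for $i \ge t$ survive, since their restriction to the last $t$ coordinates — before prepending — is not what was deleted; more precisely, after prepending $m-t$ zeros the deleted columns all have weight $\le t-1$, so every column of weight $\ge t$ in $G_m$ survives, as does every weight-$1$ column). Writing the parity-check matrix $H$ in the block form with $(G^i)^T$ in the rows together with the identity block $I_{\binom{m}{i}}$ for the surviving weights $i$, I partition the coordinates into parts $H^1$ (first $m$ coordinates) and $H^i$ for $i \in \{t,\ldots,m\}$. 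Then for a coordinate $x$ in $H^i$ with $t \le i \le m-1$, covered by a row $R^i_x$ of $H$ having $i$ ones in the first $m$ coordinates, I pick a row $R^{i+1}_y$ whose $i+1$ support-coordinates (in the first $m$) contain the $i$ support-coordinates of $R^i_x$ and which has a one in some coordinate $y$ of $H^{i+1}$; then $R^i_x + R^{i+1}_y$ is a weight-$3$ row covering both $x$ and $y$. Coordinates in $H^m$ and in $H^1$ are handled by adding the single last row (of weight $m+1$, coming from $(G^m)^T$ and the trailing $1$) to the $m$ rows containing $(G^{m-1})^T$, producing $m$ weight-$3$ rows with one nonzero entry among the first $m$ coordinates, one in $H^m$, and one in the last coordinate. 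Lemma~\ref{rm:locality from H} then gives $r=2$ (and $r\ne 1$ is clear).

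Finally, for Griesmer-optimality I would verify $\sum_{i=0}^{m-1}\lceil d/2^i\rceil = n$ with $d = 2^{m-1}-2^{t-1}+2$ and $n = 2^m-2^t+t+1$. For $0 \le i \le m-t$ the term $d/2^i = 2^{m-1-i} - 2^{t-1-i} + 2^{1-i}$; summing $2^{m-1-i}$ over $i=0,\ldots,m-1$ gives $2^m-1$, and one tracks the corrections from the $-2^{t-1-i}$ and $+2^{1-i}$ terms together with the ceilings that kick in once $i > t-1$. The arithmetic is routine: the $-2^{t-1-i}$ contributions sum (with ceilings) to $-(2^{t-1}-1) - (t-1) = -2^{t-1} - t + 2$ over the relevant range — here the ceilings on the small negative powers produce the $-(t-1)$ — and the $+2^{1-i}$ contributions sum to $+2+1 = +3$; combining gives $(2^m-1) - 2^{t-1} - t + 2 + 3 - (\text{an off-by correction}) $, which I would reconcile to exactly $2^m - 2^t + t + 1$. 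I expect this last bookkeeping with the nested ceiling functions to be the only genuinely fiddly part of the proof; everything else is a direct transcription of the $s=2$ case with "weight $2$" replaced by "weights $2$ through $t-1$" and "consecutive weights starting at $t$" in place of "consecutive weights starting at $2$".
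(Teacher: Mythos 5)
Your overall route is the paper's: compute the anticode's length and maximum weight, feed them into the Farrell construction, reuse the parity-check row-operation trick from Theorem~\ref{thm:anticode-weight2} for locality, and verify the Griesmer sum directly. The parameter computation is fine; the paper obtains $\delta=2^{t-1}-2$ slightly differently, by viewing $G_{\cA}$ as $G_t$ with the $t$ weight-one columns and the all-ones column removed, so that the codeword indexed by $u\neq 0$ has weight $2^{t-1}-\mathrm{wt}(u)-(\mathrm{wt}(u)\bmod 2)\leq 2^{t-1}-2$; your version exhibits the value only for $|S|=1$ and leaves maximality asserted, though the same counting closes it.

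The locality argument, however, rests on a false structural claim. You assert that the surviving columns of $G$ have weights only in $\{1\}\cup\{t,\ldots,m\}$. That holds only for $t=m$: a column of $G_m$ of weight $i\in\{2,\ldots,t-1\}$ is deleted only when its support lies inside the last $t$ coordinates, so $\binom{m}{i}-\binom{t}{i}>0$ such columns survive whenever $t<m$ (for $m=6$, $t=5$ there remain $5$ weight-$2$ and $10$ weight-$3$ columns, consistent with the code length $38$). Your partition of the coordinates of $H$ into $H^1$ and $H^t,\ldots,H^m$ therefore omits the coordinates attached to these surviving low-weight columns, and your covering argument never reaches them, so locality $2$ is not established as written. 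The gap is repairable: run the chain argument over all $2\leq i\leq m-1$, noting that any superset of a support not contained in the last $t$ coordinates is again not contained in them, so every surviving weight-$i$ column extends to a surviving weight-$(i+1)$ column (and the rows attached to surviving weight-$2$ columns already have weight $3$). Separately, your Griesmer bookkeeping is not actually closed and your intermediate tally is off: the corrections to $\sum_{i=0}^{m-1}2^{m-1-i}=2^m-1$ are $-\sum_{i=0}^{t-1}2^{t-1-i}=-(2^t-1)$ from the $-2^{t-1}$ part and $2+1+(t-2)\cdot 1=t+1$ from the $+2$ part after taking ceilings (all terms with $i\geq t$ round to $2^{m-1-i}$), giving $2^m-1-(2^t-1)+(t+1)=2^m-2^t+t+1$ exactly; the ``off-by correction'' you defer is precisely where this identity lives.
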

\begin{proof}
First we prove that the anticode $\cA_{t;2,3,\ldots, t-1}$ has length $2^t-t-2$ and maximum weight $2^{t-1}-2$.
Note that the generator matrix of $\cA_{t;2,3,\ldots, t-1}$ can be obtained from the generator matrix $G_t$ of  the simplex code $\simplex{t}$ by removing $t$ columns of weight 1 and one column of weight $t$, and hence the length of $\cA_{t;2,3,\ldots, t-1}$ is $2^t-t-2$. Since all the codewords in $\simplex{t}$ have weight $2^{t-1}$ and from each row of $G_t$ we removed two ones to obtain a generator matrix for our anticode, where all the rows in $G_t$ have one of the removed ones in the same place, it follows that the maximum weight of $\cA_{t;2,3,\ldots, t-1}$ is $\delta=2^{t-1}-2$. Hence, since prepending zero rows to $G_{\cA}$ does not change the length and the maximum weight of the anticode,  $\cC_{m,t}$ is a $[2^{m}-2^t+t+1,m,2^{m-1}-2^{t-1}+2]$ code. The proof of locality is similar to the proof in Thm.~\ref{thm:anticode-2weight}.

To prove that $\cC_{m,t}$ attains the Griesmer bound we have
$$\sum_{i=0}^{m-1} \left \lceil \frac{2^{m-1}-2^{t-1}+2}{2^i} \right \rceil =\sum_{i=0}^{m-1}2^i-\sum_{i=0}^{t-1}2^i+(2+t-1)
$$
$$=2^m-1-2^t+1+t+1=2^{m}-2^t+t+1,
$$
which completes the proof.
\end{proof}


In the following, we consider an anticode formed by another modification of a simplex code and prove that the LRC obtained from this anticode attains  the bound~(\ref{eq_CMBound}) and has locality 3.

\begin{theorem} \label{thm:anticodeSm-2}
Let $\cA_{m-1}$ be the anticode with generator matrix $G_{\cA}$  given by
$$G_{\cA}=\left(
    \begin{array}{c|c}
      1&000\ldots 00  \\\hline
       \begin{array}{c}
                    0 \\
                    \vdots \\
                    0 \\
                  \end{array} &G_{m-1}
                      \\
    \end{array}
  \right),
$$
where $G_{m-1}$ is the generator matrix for the simplex code $\simplex{m-1}$.
Let $\cC$ be a code obtained by the Farrell construction based on the simplex code $\simplex{m}$ and on the anticode $\cA_{m-1}$. Then $\cC$ is a $[2^{m-1}-1,m,2^{m-2}-1]$ LRC with locality $r=3$ which attains the bound~(\ref{eq_CMBound}).
\end{theorem}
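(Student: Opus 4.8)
The plan is to track the parameters of the anticode $\cA_{m-1}$ through the Farrell construction, then to identify an explicit generator matrix of $\cC$ so as to pin down its dimension and minimum distance exactly, and finally to treat the locality and the tightness of~(\ref{eq_CMBound}) separately; throughout I would assume $m\ge 4$, as in the other theorems of this section. First, $G_{\cA}$ has $1+(2^{m-1}-1)=2^{m-1}$ columns, and they are pairwise distinct (the first has a $1$ in the top coordinate, while the rest have a $0$ there and coincide with the distinct nonzero columns of $G_{m-1}$), so Construction~\ref{const_Farrell} applies. Every codeword of $\cA_{m-1}$ has the form $(\epsilon,v)$ with $\epsilon\in\F_2$ and $v$ a codeword of $\simplex{m-1}$, so its weight is $\epsilon+\text{wt}(v)\in\{0,1,2^{m-2},2^{m-2}+1\}$; hence $\cA_{m-1}$ has length $2^{m-1}$ and maximum weight $\delta=2^{m-2}+1$, and Construction~\ref{const_Farrell} then yields that $\cC$ has length $2^m-1-2^{m-1}=2^{m-1}-1$, dimension at most $m$, and minimum distance at least $2^{m-1}-\delta=2^{m-2}-1$.

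Taking $G_m$ in standard form with columns sorted by weight, deleting the columns of $G_{\cA}$ removes the column $e_1$ together with every column having first coordinate $0$, so the surviving columns are exactly the vectors $(1,w)^T$, $w\in\F_2^{m-1}\setminus\{0\}$; equivalently, up to reordering columns, $G$ is the $m\times(2^{m-1}-1)$ matrix whose first row is the all-ones vector $\mathbf{1}$ and whose remaining $m-1$ rows are those of $G_{m-1}$. Since $\text{wt}(\mathbf{1})=2^{m-1}-1$ is odd while every codeword of $\simplex{m-1}$ has even weight ($0$ or $2^{m-2}$), the row $\mathbf{1}$ is not in the row space of $G_{m-1}$, so $\text{rank}(G)=m$ and $\dim\cC=m$. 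Moreover every codeword of $\cC$ equals $a\mathbf{1}+v$ with $a\in\F_2$, $v\in\simplex{m-1}$, so (using $\text{wt}(\mathbf{1}+v)=(2^{m-1}-1)-2^{m-2}$ for $v\ne 0$) its weight lies in $\{0,2^{m-2},2^{m-1}-1,2^{m-2}-1\}$; thus the minimum distance is exactly $2^{m-2}-1$, confirming the parameters $[2^{m-1}-1,m,2^{m-2}-1]$.

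For locality, write the columns of $G$ as $(1,w_i)^T$; then $h\in\cC^\perp$ iff $\sum_i h_i=0$ and $\sum_i h_iw_i=0$, so every nonzero dual word has even support, and since the columns are distinct and nonzero no dual word has weight $1$, $2$ or $3$ — hence the locality is neither $1$ nor $2$. For the upper bound I would show that each coordinate is covered by a weight-$4$ dual word: given a coordinate with column $(1,w)^T$, pick $w_1\in\F_2^{m-1}\setminus\{0,w\}$ and $w_2\in\F_2^{m-1}\setminus\{0,w,w_1,w+w_1\}$ (possible since $2^{m-1}>4$ for $m\ge 4$) and set $w_3:=w+w_1+w_2$; a short check shows $w,w_1,w_2,w_3$ are pairwise distinct and nonzero, so the indicator vector of those four coordinates is a weight-$4$ word of $\cC^\perp$ through the given coordinate. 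Collecting such words over all coordinates and extending them to a basis of $\cC^\perp$ yields a parity-check matrix whose weight-$4$ rows cover all coordinates, so by Lemma~\ref{rm:locality from H} the locality is at most $3$, hence exactly $3$.

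It remains to show~(\ref{eq_CMBound}) is tight; as its right-hand side is automatically $\ge k=m$, it suffices to exhibit one $t$ giving value $\le m$. Taking $t=1$, I must bound $k_{opt}^{(2)}(2^{m-1}-5,2^{m-2}-1)$ by $m-3$: since $2^{m-2}-1$ is odd, appending an overall parity bit gives $A_2(2^{m-1}-5,2^{m-2}-1)\le A_2(2^{m-1}-4,2^{m-2})$, and here $2\cdot 2^{m-2}>2^{m-1}-4$, so Thm.~\ref{thm:Plotkin} yields $A_2(2^{m-1}-4,2^{m-2})\le 2\lfloor 2^{m-2}/4\rfloor=2^{m-3}$, i.e.\ $k_{opt}^{(2)}(2^{m-1}-5,2^{m-2}-1)\le m-3$; hence $t\cdot r+k_{opt}^{(2)}(2^{m-1}-5,2^{m-2}-1)\le 3+(m-3)=m$ and $\cC$ attains~(\ref{eq_CMBound}). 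The step I expect to be most delicate is the locality argument: one must verify carefully that the four chosen vectors stay pairwise distinct and nonzero (this is exactly what forces $m\ge 4$) and use that Lemma~\ref{rm:locality from H} needs only \emph{some} parity-check matrix whose light rows cover every coordinate, so padding the weight-$4$ dual words up to full rank is legitimate; alternatively one could mimic the explicit row-operation argument of Thm.~\ref{thm:anticode-weight2} on a standard-form parity-check matrix of $\cC$.
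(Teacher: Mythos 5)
Your proposal is correct and follows the same overall skeleton as the paper's proof: identify $\cC$ as the augmented simplex code to get the parameters, rule out locality $1$ and $2$ via the parity of dual weights, cover every coordinate by a weight-$4$ dual word, and close with the Cadambe--Mazumdar bound at $t=1$ plus Plotkin. The one step you realize differently is the covering by weight-$4$ dual words: the paper imports the explicit weight-$3$ row construction of the Hamming code $H^{m-1}_{i,2^j}$ from the Cadambe--Mazumdar reference and sums adjacent rows to produce exactly $2^{m-1}-1-m$ weight-$4$ rows forming a parity-check matrix, whereas you argue directly that for each column $(1,w)^T$ one can pick $w_1,w_2$ avoiding $\{0,w,w_1,w+w_1\}$ and set $w_3=w+w_1+w_2$ to get four distinct nonzero columns summing to zero. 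Your argument is more self-contained and makes the hypothesis $m\ge 4$ visible (the paper leaves it implicit), at the cost of producing a redundant, not necessarily independent, set of dual words --- which is harmless here, since Lemma~\ref{rm:locality from H} only needs some covering family of low-weight dual words (the paper itself uses a parity-check matrix with dependent rows in Thm.~\ref{thm:liftedMRd}). Your Plotkin step is also slightly cleaner: you extend by a parity bit to reach an even distance before applying Thm.~\ref{thm:Plotkin} as stated, while the paper applies the bound directly to the odd distance $2^{m-2}-1$, implicitly invoking the odd-distance variant; both yield $k_{opt}^{(2)}\le m-3$ and hence the value $m$.
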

\begin{proof}
Since all the codewords in $\simplex{m-1}$ have the constant weight $2^{m-2}$, the maximum weight of $\cA_{m-1}$ is ${\delta=2^{m-2}+1}$. Then, by the Farrell construction, $\cC$ is a $[2^{m}-1-2^{m-1}, m, 2^{m-1}-2^{m-2}-1]=[2^{m-1}-1,m,2^{m-2}-1]$ code. Note that $\cC$ is the augmented simplex code $\simplex{m-1}$ with generator matrix:
$$G=\left(
    \begin{array}{c}
      111\ldots 11  \\\hline
      G_{m-1}           \\
    \end{array}
  \right).
$$
To prove that the locality is 3, we note that all the codewords in the dual code
of $\cC$ have even weight, and hence the locality $r$ is an odd number. Clearly, ${r>1}$. We construct the parity-check matrix $H$ of $\cC$ with all the rows of weight 4, and then every coordinate will be covered by a row of $H$ of weight~4, which by Lemma~\ref{rm:locality from H} implies that $r=3$. Recall that the dual code of $\simplex{m-1}$ is a $[2^{m-1},2^{m-1}-m,3]$ Hamming code~\cite{macwilliams_theory_1988}, and denote its generator matrix by $H^{m-1}$. We consider the construction of $H^{m-1}$ with all the rows of weight 3 from~\cite{cadambe_upper_2013}, where the rows have nonzero entries in the positions ${(i,2^j, i+2^j)}$, for $1\leq j\leq m-2$, $1\leq i\leq 2^j$. Let denote by $H_{i,2^j}^{m-1}$ a row of $H^{m-1}$ with nonzero entries in the positions ${(i,2^j, i+2^j)}$. The parity-check matrix $H$ will consist of $2^{m-1}-m-1$ weight-4 rows $H^{m-1}_{1,2}+H^{m-1}_{1,2^j}$, $2\leq j\leq m-2$, and $H^{m-1}_{i,2^j}+H^{m-1}_{i+1,2^j}$, $2\leq j\leq m-2$, $1\leq i\leq 2^{j}-2$.

To prove the optimality of the obtained code $\cC$ we apply the bound~(\ref{eq_CMBound}) with $t=1$ and use the Plotkin bound:
$$3+k_{opt}^{(2)}(2^{m-1}-1-4,2^{m-2}-1)\leq 3+\left\lfloor\log 2\cdot\left\lfloor\frac{2^{m-2}-1}{3}\right\rfloor\right\rfloor
$$
$$
\leq 3+\left\lfloor\log(2^{m-2}-1)\right\rfloor=3+m-3=m.$$
\end{proof}

\begin{remark}
One can check that the code $\cC$  from Thm.~\ref{thm:anticodeSm-1} attains the Griesmer bound.
\end{remark}


\subsection{LRCs based on Subspace Codes}
In this subsection we consider a construction of optimal binary LRCs with the parity-check matrix formed by the codewords of a
\emph{subspace} code. In particular, we are interested in a special kind of subspace codes, called \emph{lifted rank-metric codes}~\cite{silberstein_codes_2011}, with the trivial distance 1 and the constant dimension 2.
 More precisely, let consider a $2^{2s-4}\times (2^s-2^{s-2})$ binary matrix $H^s$ whose columns are indexed by the vectors in $\F_2^s\setminus \{00v:v\in \F_2^{s-2}\}$ and whose rows are indexed by the 2-dimensional subspaces of $\F_2^s$ contained in $\F_2^s\setminus \{00v:v\in \F_2^{s-2}\}$. Every row of $H^s$ is the incidence vector of such a 2-dimensional subspace and then has weight~3 (note that the all-zero vector is not considered).  It was proved in~\cite[Thm. 11]{silberstein_codes_2011} that the code $\cC^s$ with the parity-check matrix $H^s$  is a $[2^s-2^{s-2},s,\frac{2^s-2^{s-2}}{2}]$ $2^{s-2}$-quasi-cyclic code. Note that $H^s$ contains dependent rows.
 \begin{example} For $s=4$, the matrix $H^4$ has the following form. The four rows above this matrix
represent the vectors which index the columns of $H^4$. For example, the first row of $H^4$ corresponds to the subspace which contains the vectors $\{(0,1,0,0),(1,0,0,0),(1,1,0,0)\}$:

 \begin{footnotesize}
$$\begin{array}{c}
 \left. \begin{tabular}{c|c|c}
    \bf {0 0 0 0} & \bf {1 1 1 1} & \bf {1 1 1 1} \\
    \bf {1 1 1 1} & \bf {0 0 0 0} & \bf {1 1 1 1} \\
    \bf {0 0 1 1} & \bf {0 0 1 1} & \bf {0 0 1 1} \\
    \bf {0 1 0 1} & \bf {0 1 0 1} & \bf {0 1 0 1} \\
    \end{tabular} \right. \\
    \left(\begin{tabular}{c|c|c}\hline\hline
    \tn{1 0 0 0} & \tn{1 0 0 0} & \tn{1 0 0 0} \\
    \tn{0 1 0 0} & \tn{0 0 0 1} & \tn{0 0 1 0} \\
    \tn{0 0 1 0} & \tn{0 1 0 0} & \tn{0 0 0 1} \\
    \tn{0 0 0 1} & \tn{0 0 1 0} & \tn{0 1 0 0} \\
     \hline
    \tn{1 0 0 0} & \tn{0 1 0 0} & \tn{0 1 0 0} \\
    \tn{0 1 0 0} & \tn{0 0 1 0} & \tn{0 0 0 1} \\
    \tn{0 0 1 0} & \tn{1 0 0 0} & \tn{0 0 1 0} \\
    \tn{0 0 0 1} & \tn{0 0 0 1} & \tn{1 0 0 0} \\
     \hline
    \tn{1 0 0 0} & \tn{0 0 1 0} & \tn{0 0 1 0} \\
    \tn{0 1 0 0} & \tn{0 1 0 0} & \tn{1 0 0 0} \\
    \tn{0 0 1 0} & \tn{0 0 0 1} & \tn{0 1 0 0} \\
    \tn{0 0 0 1} & \tn{1 0 0 0} & \tn{0 0 0 1} \\
     \hline
    \tn{1 0 0 0} & \tn{0 0 0 1} & \tn{0 0 0 1} \\
    \tn{0 1 0 0} & \tn{1 0 0 0} & \tn{0 1 0 0} \\
    \tn{0 0 1 0} & \tn{0 0 1 0} & \tn{1 0 0 0} \\
    \tn{0 0 0 1} & \tn{0 1 0 0} & \tn{0 0 1 0} \\
    \end{tabular}
\right)
\end{array}$$
\end{footnotesize}
 \end{example}

 In the following we prove that this code attains the bound~(\ref{eq_CMBound}) with locality 2.

\begin{theorem} \label{thm:liftedMRd}
Let $\cC^s$ be the linear code with the parity-check matrix $H^s$ defined above. Then $\cC^s$ is a $[3\cdot2^{s-2},s,3\cdot2^{s-3}]$ LRC with locality $r=2$ which attains the bound~(\ref{eq_CMBound}).
\end{theorem}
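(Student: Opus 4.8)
The plan is to get the three code parameters for free from the cited result and spend the work on the locality claim and on the Cadambe--Mazumdar optimality. Since $2^s-2^{s-2}=3\cdot 2^{s-2}$ and $\tfrac{2^s-2^{s-2}}{2}=3\cdot 2^{s-3}$, the assertion that $\cC^s$ is a $[3\cdot 2^{s-2},s,3\cdot 2^{s-3}]$ code is exactly \cite[Thm.~11]{silberstein_codes_2011}, which also guarantees that $H^s$ is a parity-check matrix of $\cC^s$, so nothing remains to prove there.

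For the locality, write $W=\{(0,0,v):v\in\F_2^{s-2}\}$, an $(s-2)$-dimensional subspace of $\F_2^s$, so that the coordinates of $\cC^s$ are indexed by $\F_2^s\setminus W$ and every row of $H^s$ is the weight-$3$ incidence vector $\{a,b,a+b\}$ of a $2$-dimensional subspace $U=\langle a,b\rangle$ with $U\cap W=\{0\}$. By Lemma~\ref{rm:locality from H} it suffices to show that every coordinate lies in at least one such $U$: given $v\in\F_2^s\setminus W$, the set $W\cup(v+W)$ has $2^{s-1}<2^s$ elements, so we may choose $w\notin W\cup(v+W)$, and then $U=\langle v,w\rangle$ satisfies $U\setminus\{0\}=\{v,w,v+w\}\subseteq\F_2^s\setminus W$ (if $v+w\in W$ then $w\in v+W$, a contradiction). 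Hence $\cC^s$ has locality at most $2$. To get equality I would argue that $\cC^s$ is the simplex code $\simplex{s}$ punctured at the $2^{s-2}-1$ coordinates of $W\setminus\{0\}$ — puncturing at fewer than $d(\simplex{s})=2^{s-1}$ positions preserves the dimension $s$ — so that $(\cC^s)^\perp$ is the corresponding shortening of the $[2^s-1,2^s-1-s,3]$ Hamming code; shortening does not decrease minimum distance and the weight-$3$ triples above are still codewords, so $d\bigl((\cC^s)^\perp\bigr)=3$, no coordinate of $\cC^s$ is a repetition of another, and the locality is not $1$.

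For optimality I would apply~(\ref{eq_CMBound}) with $t=1$, giving $k\le 2+k_{opt}^{(2)}\bigl(3\cdot 2^{s-2}-3,\ 3\cdot 2^{s-3}\bigr)$, and bound the last term by the Plotkin bound (Thm.~\ref{thm:Plotkin}). With $n'=3\cdot 2^{s-2}-3$ and $d=3\cdot 2^{s-3}$ (which is even for $s\ge 4$) we have $2d=3\cdot 2^{s-2}>n'$ and $2d-n'=3$, so $A_2(n',d)\le 2\lfloor d/3\rfloor=2^{s-2}$, whence $k_{opt}^{(2)}(n',d)\le s-2$ and $k\le s$. Since $\dim\cC^s=s$, the bound is met with equality.

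The step I expect to be the real obstacle is the ``locality is exactly $2$'' part: one must be careful that the rows of $H^s$ span all of $(\cC^s)^\perp$ (this is what the cited theorem gives) and then correctly identify $\cC^s$ with a punctured simplex code by a dimension count, or equivalently argue directly that a generator matrix of $\cC^s$ has pairwise distinct columns. The remaining pieces — the parameters, the weight-$3$ covering, and the Plotkin estimate — are routine.
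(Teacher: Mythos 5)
Your proposal is correct and follows essentially the same route as the paper: parameters imported from \cite[Thm.~11]{silberstein_codes_2011}, locality $2$ from the weight-$3$ rows of $H^s$ via Lemma~\ref{rm:locality from H}, and optimality from the bound~(\ref{eq_CMBound}) with $t=1$ combined with the Plotkin bound, yielding $k\le 2+(s-2)=s$. You are in fact more careful than the paper, which simply asserts that weight-$3$ rows give locality $2$, whereas you explicitly verify that every coordinate is covered by such a row and that the locality is not $1$ (via the identification of $\cC^s$ with a punctured simplex code, which the paper only notes in a later remark).
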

 \begin{proof}
 The length, the dimension and the  minimum distance are proved in~\cite[Thm. 11]{silberstein_codes_2011}. Since every row in $H^s$ has weight 3, the locality is 2.
  By applying the bound~(\ref{eq_CMBound}) with $t=1$ and using the Plotkin bound we have
  $$2+k_{opt}^{(2)}(3\cdot2^{s-2}-3,3\cdot2^{s-3})
   \leq 2+\log 2\cdot\left\lfloor\frac{3\cdot2^{s-3}}{3}\right\rfloor$$
  $$=2+\log 2^{s-2}=2+s-2=s$$
  In other words, the code $\cC^s$  always attain the bound~(\ref{eq_CMBound}).
\end{proof}
\begin{remark}
One can check that the code $\cC^s$  from Thm.~\ref{thm:liftedMRd} attains the  Griesmer bound.
\end{remark}
\begin{remark}  Note that the code $\cC^s$ from Thm.~\ref{thm:liftedMRd} can be also constructed by applying the Farrell construction on a simplex code $\simplex{s}$, when using a simplex code $\simplex{s-2}$ as an anticode, as follows.  By the construction of $\cC^s$, the columns of the generator matrix for $\cC^s$ are formed by all the vectors in $\F_2^s\setminus \{00v:v\in \F_2^{s-2}\}$. Then the columns of the generator matrix for the anticode are formed by the vectors in $\{00v:v\in \F_2^{s-2}\}$. This anticode has length $2^{s-2}-1$ and maximum weight $2^{s-3}$.
\end{remark}

\section{Conclusion}
\label{sec:conclusion}
We presented a construction for four families of
binary linear optimal LRCs with locality $r=2$ and $r=3$ (see Tab.~\ref{tab_CodeExamples} for some numerical examples). This construction is based on various anticodes.  Besides the optimality with respect to the Cadambe--Mazumdar bound for a given locality, several of our families of codes fulfill the Griesmer bound with equality.

\begin{table}[htb]
\caption{Optimal Binary LRC with Locality Two and Three.}
\label{tab_CodeExamples}
\begin{center}
\begin{tabular}{lcll}
$[n,k,d]$  & Locality $r$ & Reference & \\
\hline
$[28,5,14]$ & 2 & Thm.~\ref{thm:anticode-weight2} & $m=5, s=3$ \\
$[25,5,12]$ & 2 & & $m=5, s=4$ \\
$[21,5,10]$ & 2 & & $m=5, s=5$ \\
\hline
$[60,6,30]$ & 2 & Thm.~\ref{thm:anticode-weight2}  & $m=6, s=3$ \\
$[57,6,28]$ & 2 &  & $m=6, s=4$ \\
$[53,6,26]$ & 2 &  & $m=6, s=5$ \\
\hline
$[21,5,10]$ & 2 & Thm.~\ref{thm:anticodeSm-1} & $m=5$, $t=4$ \\
$[38,6,18]$ & 2 & & $m=6$, $t=5$ \\
\hline
$[31,6,15]$ & 3 & Thm.~\ref{thm:anticodeSm-2} & $m=6$ \\
$[63,7,31]$ & 3 & & $m=7$ \\
\hline
$[24,5,12]$ & 2 & Thm.~\ref{thm:liftedMRd} & $s=5$ \\
$[48,6,24]$ & 2 & & $s=6$
\end{tabular}
\end{center}
\end{table}

\vspace{0.5cm}


\begin{thebibliography}{1}
\bibitem{cadambe_upper_2013}
V.~Cadambe and A.~Mazumdar, ``An upper bound on the size of locally recoverable
  codes,'' in \emph{{IEEE} Int. Symp. Network Coding ({NetCod})},
  Jun. 2013, pp. 1--5.

\bibitem{dimakis_network_2010}
A.~Dimakis, P.~Godfrey, Y.~Wu, M.~Wainwright, and K.~Ramchandran, ``Network
  coding for distributed storage systems,'' \emph{{IEEE} Trans. Inf. Theory}, vol.~56, no.~9, pp. 4539--4551, Sep. 2010.

\bibitem{dimakis_survey_2011}
A.~Dimakis, K.~Ramchandran, Y.~Wu, and C.~Suh, ``A survey on network codes for
  distributed storage,'' \emph{Proceedings of the {IEEE}}, vol.~99, no.~3, pp.
  476--489, Mar. 2011.

\bibitem{farrell_linear_1970}
P.~Farrell, ``Linear binary anticodes,'' \emph{Electron. Lett.}, vol.~6,
  no.~13, pp. 419--421, Jun. 1970.

\bibitem{gopalan_explicit_2014}
P.~Gopalan, C.~Huang, B.~Jenkins, and S.~Yekhanin, ``Explicit maximally
  recoverable codes with locality,'' \emph{{IEEE} Trans. Inf. Theory}, vol.~60, no.~9, pp. 5245--5256, Sep. 2014.

\bibitem{gopalan_locality_2012}
P.~Gopalan, C.~Huang, H.~Simitci, and S.~Yekhanin, ``On the locality of
  codeword symbols,'' \emph{{IEEE} Trans. Inf. Theory},
  vol.~58, no.~11, pp. 6925--6934, Nov. 2012.

\bibitem{goparaju_binary_2014}
S.~Goparaju and R.~Calderbank, ``Binary cyclic codes that are locally
  repairable,'' in \emph{{IEEE} Int. Symp. Inf. Theory ({ISIT})}, Jun. 2014, pp. 676--680.

\bibitem{kamath_codes_2014}
G.~Kamath, N.~Prakash, V.~Lalitha, and P.~Kumar, ``Codes with local
  regeneration and erasure correction,'' \emph{{IEEE} Trans. Inf. Theory}, vol.~60, no.~8, pp. 4637--4660, Aug. 2014.

\bibitem{kamath_explicit_2013}
G.~Kamath, N.~Silberstein, N.~Prakash, A.~Rawat, V.~Lalitha, O.~Koyluoglu,
  P.~Kumar, and S.~Vishwanath, ``Explicit {MBR} all-symbol locality codes,'' in
  \emph{{IEEE} Int. Symp. Inf. ({ISIT})}, Jul. 2013, pp. 504--508.

\bibitem{macwilliams_theory_1988}
F.~J. MacWilliams and N.~J.~A. Sloane, \emph{The Theory of Error-Correcting
  Codes}. North Holland Publishing Co., Jun. 1988.

\bibitem{pamies-juarez_locally_2013}
L.~Pamies-Juarez, H.~Hollmann, and F.~Oggier, ``Locally repairable codes with
  multiple repair alternatives,'' in \emph{{IEEE} Int. Symp. Inf. ({ISIT})}, Jul. 2013, pp. 892--896.

\bibitem{papailiopoulos_locally_2014}
D.~S. Papailiopoulos and A.~G. Dimakis, ``Locally repairable codes,'' in \emph{{IEEE} Trans. Inf. Theory}, vol.~60, no.~10, pp. 5843--5855, Oct. 2014.

\bibitem{prakash_optimal_2012}
N.~Prakash, G.~M. Kamath, V.~Lalitha, and P.~V. Kumar, ``Optimal linear codes
  with a local-error-correction property,'' in \emph{{IEEE} Int. Symp. Inf. Theory ({ISIT})}, Jul. 2012, pp. 2776--2780.

\bibitem{prakash_codes_2014}
N.~Prakash, V.~Lalitha, and P.~Kumar, ``Codes with locality for two erasures,''
  in \emph{{IEEE} Int. Symp. Inf. Theory ({ISIT})},  Jun. 2014, pp. 1962--1966.

\bibitem{rawat_optimal_2014}
A.~Rawat, O.~Koyluoglu, N.~Silberstein, and S.~Vishwanath, ``Optimal locally
  repairable and secure codes for distributed storage systems,'' in \emph{{IEEE} Trans. Inf. Theory}, vol.~60, no.~1, pp. 212--236, Jan. 2014.

\bibitem{rawat_locality_2014}
A.~Rawat, D.~Papailiopoulos, A.~Dimakis, and S.~Vishwanath, ``Locality and
  availability in distributed storage,'' in \emph{{IEEE} Int. Symp. Inf. Theory ({ISIT})}, Jun. 2014, pp. 681--685.

\bibitem{silberstein_codes_2011}
N.~Silberstein and T.~Etzion, ``Codes and designs related to lifted {MRD}
  codes,'' in \emph{{IEEE} Int. Symp. Inf. Theory ({ISIT})}, Jul. 2011, pp. 2288--2292.

\bibitem{silberstein_optimal_2013}
N.~Silberstein, A.~Rawat, O.~Koyluoglu, and S.~Vishwanath, ``Optimal locally
  repairable codes via rank-metric codes,'' in \emph{{IEEE} Int. Symp. Inf. Theory ({ISIT})}, Jul. 2013, pp. 1819--1823.

\bibitem{shahabinejad_efficient_2014}
M.~Shahabinejad, M.~Khabbazian, and M.~Ardakani, ``An efficient binary locally
  repairable code for hadoop distributed file system,'' in \emph{{IEEE} Comm. Lett.}, vol.~18, no.~8, pp. 1287--1290, Aug. 2014.

\bibitem{tamo_family_2014}
I.~Tamo and A.~Barg, ``A family of optimal locally recoverable codes,'' in \emph{{IEEE} Trans. Inf. Theory}, vol.~60, no.~8, pp. 4661--4676, Aug. 2014.

\bibitem{westerback_almost_2014}
T.~Westerback, T.~Ernvall, and C.~Hollanti, ``Almost affine locally repairable
  codes and matroid theory,'' in \emph{{IEEE} Inf. Theory Workshop ({ITW})}, Nov. 2014, pp. 621--625.

\bibitem{zeh_optimal_2015}
A.~Zeh and E.~Yaakobi, ``Optimal linear and cyclic locally repairable codes
  over small fields,'' to appear in \emph{{IEEE} Inf. Theory Workshop ({ITW})}, Jerusalem, Israel, Apr. 2015.

\end{thebibliography}
\end{document}